\newtheorem{assertion}{Proposition\!}
\def\Ba{\text{B}}
\def\Fa{\text{F}}
\def\tr{\operatorname{tr}}
\def\Tr{\operatorname{Tr}}
\def\diag{\operatorname{diag}}
\def\defa{\operatorname{def}}
\begin{document}
\title{
$${}$$\\
{\bf The correlation functions} \\
{\bf of the ${\bf XX}$ Heisenberg magnet and}\\
{\bf random walks of vicious walkers }}
\author{
$${}$$\\
{\bf N.~M.~Bogoliubov$^\dagger$, C.~Malyshev$^\ddagger$}\\[0.5cm]
{\small Steklov Mathematical Institute,
St.-Petersburg Department, RAS}\\
{\small  Fontanka 27, St.-Petersburg, 191023, Russia} \\
[0.5cm]
$^\dagger$ e-mail: {\it bogoliub@pdmi.ras.ru}\\
\,\,$^\ddagger$ e-mail: {\it malyshev@pdmi.ras.ru} }

\date{}

\maketitle

\begin{abstract} \noindent
A relationship of the random walks on one-dimensional periodic lattice and the correlation functions of the $XX$ Heisenberg spin chain is investigated. The operator averages taken over the ferromagnetic state play a role of generating functions of the number of paths made by the so-called ``vicious'' random walkers (the vicious walkers annihilate each other provided they arrive at the same lattice site). It is shown that the two-point correlation function of spins, calculated over eigen-states of the $XX$ magnet, can be interpreted as the generating function of paths made by a single walker in a medium characterized by a non-constant number of vicious neighbors. The answers are obtained
for a number of paths  made by the described  walker from some fixed lattice site to another sufficiently remote one. Asymptotical estimates for the number of paths are provided in the limit, when the number of steps is increased.
\end{abstract}

\leftline{\emph{{\bf Keywords:}} random walks, Heisenberg magnet,
correlation functions}

\thispagestyle{empty}

\newpage

\section{Introduction}
\label{sec1}

The random walks is a classical problem both for combinatorics and
statistical physics. The problem of enumeration of the paths made
by the, so-called, \textit{vicious} walkers on the one-dimensional
lattice has been formulated and investigated in details by
Fisher~\cite{1}. It is supposed that any two vicious walkers,
provided both arrive at the same lattice site, annihilate not only
one another but all other walkers as well. The problem mentioned
still continues to attract considerable attention both of
physicists and mathematicians~\cite{2}--\cite{9}. Closely related
problems arise also in the studies of the self-organized
criticality~\cite{10}, domain walls~\cite{11}, and
polymers~\cite{12}. In paper~\cite{13} a random walks of the
annihilating particles on a ring was considered. In
paper~\cite{14} a random turns walks on a semi-axes with a
possible creation of the particles at the origin was studied.

It has been shown in~\cite{15},~\cite{16} that the correlation
functions, obtained as an averages over the ferromagnetic state of
the $XX$ Heisenberg chain, can be used for enumeration of the
paths of random walks of vicious walkers. In the present paper the
averages of special type are investigated both for the case of
ferromagnetic state and for superposition of the eigen-states of
the $XX$ magnet in zero magnetic field. The averages in question
play a role of the generating functions of number of paths of the
vicious walkers. The calculation of the correlation functions is
carried out by means of the functional integration~\cite{17},~\cite{18}. The answers are obtained for the number of paths of a single pedestrian which is travelling from one chosen site to another sufficiently remote lattice site. The asymptotical estimates are obtained for the number of paths in the limit, when the number of steps (and, correspondingly, the number of random turns) is increasing.

The paper is organized as follows. Section~\ref{sec1} has an introductory character. The Hamiltonian of the model and general calculation of the correlation functions are discussed in Section~\ref{sec2}. Section~\ref{sec3} deals with the specific calculations and the corresponding asymptotic estimates. Discussion in Section~\ref{sec4} concludes the paper.

\section{The model and the correlation functions}
\label{sec2}

The $XX$ magnet we are interested in is a particular limit of a
more general spin model known as the $XY$ Heisenberg chain, with
the Hamiltonian  in the transverse magnetic field $h>0$ given
by~\cite{19},~\cite{20}:
\begin{equation}
\begin{aligned}
&H=H_0+\gamma H_1-hS^z,\\
&H_0\equiv-\sum^{M}_{n,m=1}\Delta^{(+)}_{nm}\sigma^+_{n}\sigma^-_{m},
\\
&H_1\equiv-\frac12\sum_{n,m=1}^M\Delta^{(+)}_{nm}
(\sigma^+_n\sigma^+_{m}+\sigma^-_n\sigma^-_{m}),\qquad
S^z\equiv\frac{1}2\sum_{n=1}^M\sigma^z_n,
\end{aligned}
\label{1}
\end{equation}
where $S^z$ is $z$-component of the total spin operator, and $\gamma$
is the anisotropy parameter. The local spin operators
$\sigma^\pm_n=(\sigma^x_n\pm i\sigma^y_n)/2$ and~$\sigma^z_n$ are
given by the Pauli matrices, which depend on the lattice argument
$n\in\mathcal M\equiv\{1,2,\dots,M\}$, where $M=0\pmod{2}$. The corresponding commutation relations have the form:
$$[\sigma^+_k,\sigma^-_l]\,=\,\delta_{k,l}\sigma^z_l\,,\quad
[\sigma^z_k,\sigma^\pm_l]\,=\,\pm2\delta_{k,l}\sigma^{\pm}_l\,.
$$
The introduced \textit{hopping matrix}~$\Delta^{(s)}$
is defined by the following entries:
\begin{equation}
\Delta^{(s)}_{nm}\equiv\frac12\,
(\delta_{|n-m|,1}+s\delta_{|n-m|,M-1}), \label{2}
\end{equation}
where ${\delta}_{n,l}$ is the Kronecker symbol, and $s$ can take two
values: $s=\pm$. It is assumed that the periodic boundary conditions
$\sigma^{\alpha}_{n+M}=\sigma^{\alpha}_n$ are imposed for any $n\in\mathcal M$. The Hamiltonian $H$~\eqref{1} is reduced to the Hamiltonian of the $XX$ magnet at zero value of the parameter $\gamma$.

The most general definition of the time $t$ and temperature $T\equiv
1/\beta$ dependent correlation functions of the model under consideration looks as follows:
\begin{equation}
G_{j;l}^{ab}(t)\equiv\frac{1}Z
\Tr(\sigma^a_{j}(0)\sigma^b_l(t)e^{-\beta H}),\qquad
Z\equiv\Tr(e^{-\beta H}),
\label{3}
\end{equation}
where $\sigma^b_l(t)\equiv e^{itH}\sigma^b_le^{-itH}$ and $\Tr$
means the averaging with respect to all eigen-states of the
Hamiltonian~$H$. In addition, the normalization involves the
partition function~$Z$. Calculation of the correlators~\eqref{3}
has been carried out in~\cite{21} as averaging over all
eigen-functions of the Hamiltonian of the $XX$ magnet. In
\cite{21} the main attention has been paid  to a relationship
between the correlation functions and the Fredholm determinants in
the thermodynamic limit. In the present paper we shall consider
the $XX$ chain only and denote its Hamiltonian by~$H$.

To calculate the averages~\eqref{3} one can use a representation of the
canonical Fermi variables~$c_j$, $c^{\dagger}_j$, $j\in{\mathcal M}$
through the spin variables \cite{19},~\cite{20}. The corresponding
Jordan--Wigner transformation has the form:
\begin{equation}
\sigma^+_n=\biggl(\,\prod_{j=1}^{n-1}\sigma^z_j\biggr)c_n,\qquad
\sigma^-_n=c_n^{\dagger}\biggl(\,\prod_{j=1}^{n-1}\sigma^z_j\biggr),
\qquad n\in{\mathcal M},
\label{4}
\end{equation}
where $\sigma^z_j=1-2c_j^{\dagger} c_j$. The periodic boundary
conditions for the spin variables lead to the following boundary
conditions for the Fermi variables:
\begin{equation}
c_{M+1}=(-1)^{\mathcal{N}}c_1,\qquad
c^{\dagger}_{M+1}=c^{\dagger}_1(-1)^{\mathcal{N}},
\label{5}
\end{equation}
where $\mathcal{N}=\sum_{n=1}^Mc_n^{\dagger} c_n$ is the operator
of the total number of particles. The Hamiltonian $H$~\eqref{1} takes the following form in the fermion representation~\cite{19},~\cite{20}
\begin{equation}
H=H^+P^++H^-P^-,
\label{6}
\end{equation}
where $P^{+}$ ($P^{-}$) are projectors on the states characterized
by an even/odd number of fermions:
$$
P_++P_-=\mathbb I\,,\quad P_+-P_-=(-1)^{\mathcal N}\,.
$$
The operators $H^{\pm}$~\eqref{6} are formally identical, their superscripts  $s=\pm$ point out an appropriate specification of the boundary conditions~\eqref{5}:
$$
c_{M+1}\,=\,-s\,c_1\,,\quad c^{\dagger}_{M+1}\,=\,- s\,c^{\dagger}_1\,.
$$
To put it differently, the quadratic in the fermion variables operators~$H^\pm$ has the following representation:
\begin{equation}
H^{\pm}=c^{\dagger}\widehat H^{\pm} c-\frac{Mh}2,\qquad
\widehat H^{\pm}=-\hat\Delta^{(\mp)}+h{\hat I},
\label{7}
\end{equation}
where the matrices ${\widehat H}^{\pm}$ are expressed through the hopping matrices~\eqref{2} and $\hat I$ is the unit matrix:
$$
{\widehat H}^\pm\,=\,\left(
\begin{array}{cccccc}   h   & -1/2 &     &   &      & \pm1/2 \\
        -1/2  &  h   & -1/2  &        &         &       \\
              & -1/2 &  h    & -1/2   &         &        \\
              &      &       & \dots  &         &        \\
              &      &       &  -1/2  & h       & -1/2   \\
     \pm 1/2  &      &       &        & -1/2    &  h
\end{array}\right)
$$
(only non-zero entries are displayed). Besides, the short-hand notations~$c^{\dagger}$ and~$c$ are used in~\eqref{7} for the $M$-dimensional row and column with the entries~$c_n^{\dagger}$,~$c_n$, $n\in\mathcal M$.

In particular, the correlator~\eqref{3} at $a=b=z$ takes the following form in the representation~\eqref{4}  \cite{18},~\cite{22},~\cite{23}:
\begin{equation}
G_{j;l}^{zz}(t)=1-\frac{2}Z\Tr(c^{\dagger}_{j}c_{j}e^{-\beta H})-
\frac{2}Z\Tr(c^{\dagger}_lc_le^{-\beta H})+
\frac{4}Z\Tr(c^{\dagger}_{j}c_{j}
e^{itH}c^{\dagger}_lc_le^{-(\beta+it)H}).
\label{8}
\end{equation}
In order to calculate~\eqref{8}, it is convenient to
introduce the generating functional~\cite{18}:
\begin{equation}
\mathcal G\equiv\mathcal G(S,T\mid\lambda,\nu)=
\frac{1}Z\Tr(e^Se^{-\lambda H}e^Te^{-\nu H}),
\label{9}
\end{equation}
where~$\lambda$, $\nu$ are the complex parameters, $\lambda+\nu=\beta$.
The quadratic operators $S\equiv c^{\dagger}{\widehat S}c$ and $T\equiv c^{\dagger}{\widehat T}c$, used in~\eqref{9}, are defined by means of the matrices $\widehat S=\diag\{S_1,S_2,\dots,S_M\}$ and $\widehat T=\diag\{T_1,T_2,\dots,T_M\}$. For instance, the last term in right-hand side of  \eqref{8} is obtained from~\eqref{9} in the following way:
\begin{equation}
\lim_{\substack{S_n,T_n\to0,\\n\in \mathcal M}}\,\lim_{\lambda\to-it}\,
\lim_{\nu\to\beta+it}\frac{\partial}{\partial S_j}\,
\frac{\partial}{\partial T_l}\,\mathcal G(S,T\mid\lambda,\nu).
\label{10}
\end{equation}
The trace in right-hand side of~\eqref{9} can be re-written by
means of~\eqref{6}~\cite{18}:
\begin{equation}
\Tr(e^Se^{-\lambda H}e^Te^{-\nu H})=
\frac12\,(\mathcal G^+_{\Fa}Z^+_{\Fa}+\mathcal G^-_{\Fa}Z^-_{\Fa}
+\mathcal G^+_{\Ba}Z^+_{\Ba}-\mathcal G^-_{\Ba}Z^-_{\Ba}),
\label{11}
\end{equation}
where
\begin{equation}
\begin{aligned}
&\mathcal G^{\pm}_{\Fa}Z^{\pm}_{\Fa}\equiv
\Tr(e^Se^{-\lambda H^{\pm}}e^Te^{-\nu H^{\pm}}),
\\
&\mathcal G^{\pm}_{\Ba}Z^{\pm}_{\Ba}\equiv
\Tr(e^Se^{-\lambda H^{\pm}}e^T(-1)^{\mathcal N}e^{-\nu H^{\pm}}),
\end{aligned}
\label{12}
\end{equation}
and
$$
Z^{\pm}_{\Fa}\,=\,\Tr(e^{-\beta H^{\pm}})\,,\quad
Z^{\pm}_{\Ba}\,=\,\Tr((-1)^{\mathcal N}e^{-\beta H^{\pm}})\,.
$$
Moreover, for the partition function $Z$ we obtain the representation: $$
Z=\frac12\,(Z^+_{\Fa}+Z^-_{\Fa}+Z^+_{\Ba}-Z^-_{\Ba})\,.
$$
In the thermodynamic limit the terms with the subscript $\Ba$ are
mutually compensated, therefore, in order to obtain $\mathcal G$~\eqref{9} it is enough to calculate $\mathcal G^{\pm}_{\Fa}$.

The considered fermion representation is characterized by the
existence of the Fock state~$|0\rangle$ common for both operators
$H^{+}$ and $H^{-}$, and satisfying the relations $c_k|0\rangle=0$, $k\in\mathcal M$. However, the corresponding coherent states over~$|0\rangle$,
\[
\begin{array}{rcl}
&&\mid\!z\big\rangle\,\equiv\,\exp \Big(\sum\limits^M_{k=1} c^\dagger_k
z_k\Big) \mid\!0\big\rangle\,\equiv\,\exp (c^\dagger z)\mid\!0\big\rangle\,,\\ [0.5cm]
&&\big\langle z^*\!\mid\,\equiv\,\big\langle 0\!\mid \exp
       \Big(\sum\limits^M_{k=1} z^*_k c_k \Big)\,\equiv\,
     \big\langle 0\!\mid\exp (z^* c)\,,
\end{array}
\]
are different for~$H^+$ and~$H^-$. Here the short-hand notations $z^*\equiv(z^*_1,\dots,z^*_M)$ and $z\equiv(z_1,\dots,z_M)$ are used for the sets of independent Grassmann parameters $z_k,z^*_k$, $k\in {\mathcal M}$ (it is appropriate to omit the extra index~$\pm$ in $z^*$, $z$). Besides, $\sum_{k=1}^{M} c^{\dagger}_kz_k\equiv c^{\dagger} z$,
$\prod_{k=1}^M dz_k\equiv dz$. Let us calculate
$\mathcal G^{\pm}_{\Fa}Z^{\pm}_{\Fa}$ in~\eqref{12} using the representation of the trace in the Grassmann integration formalism~\cite{18}:
\begin{equation}
\mathcal G^{\pm}_{\Fa}Z^{\pm}_{\Fa}=\int dz\,dz^*\,e^{z^*z}
\langle z^*|e^Se^{-\lambda H^{\pm}}e^Te^{-\nu H^{\pm}}|z\rangle.
\label{13}
\end{equation}

In order to represent the right-hand side of this equality as the functional integral, let us introduce~$L$ new copies of the coherent states $|x(I)\rangle$, $\langle x^*(I)|$, where $I\in\{1,2,\dots,L\}$. Each of the~$2L$ multi-indices $x^*(I)$, $x(I)$ is expressed by $M$ independent Grassmann parameters. Using the decompositions of unity one can represent the right-hand side of~\eqref{13} as the $(L+1)$-fold multiple integral. In order to express the quasi-periodicity condition it is convenient to introduce the auxiliary variables:
\begin{equation}
-{\widehat E}x(0)=x(L+1)\equiv z,\qquad
-x^*(L+1)=x^*(0){\widehat E}^{-1}\equiv z^*,
\label{14}
\end{equation}
where ${\widehat E}\equiv e^{\widehat S}e^{-\lambda{\widehat H}^{\pm}}e^{\widehat T}$. Tending~$L$ to infinity, we obtain the functional integral over the space of the trajectories $x^*(\tau)$, $x(\tau)$, where $\tau\in {\mathbb R}$:
\begin{equation}
\mathcal G^{\pm}_{\Fa}Z^{\pm}_{\Fa}=\int e^S\,d\lambda^*\,d\lambda
\prod_{\tau} dx^*(\tau)\,dx(\tau).
\label{15}
\end{equation}
The action functional $S\equiv\int L(\tau)\,d\tau$ is
expressed through the Lagrangian $L(\tau)$:
\begin{equation*}
L(\tau)\equiv x^*(\tau)\biggl(\frac{d}{d\tau}-
\widehat H^{\pm}\biggr)x(\tau)+
J^*(\tau)x(\tau)+x^*(\tau)J(\tau),
\end{equation*}
where
\begin{equation*}
J^*(\tau)\equiv\lambda^*(\delta(\tau)\hat I+
\delta(\tau-\nu){\widehat E}^{-1}),\qquad
J(\tau)\equiv(\delta(\tau)\hat I+
\delta(\tau-\nu)\widehat E)\lambda.
\end{equation*}
The integration over the auxiliary Grassmann variables $\lambda^*$,~$\lambda$ in~\eqref{15} guarantees the fulfilment of the
constraints~\eqref{14}. The $\delta$-functions in $J^*(\tau)$, $J(\tau)$ reduce $\tau\in\mathbb R$ to $\tau\in[0,\beta]$. The stationarity conditions $\delta S/\delta x^*=0$, $\delta S/\delta x=0$ result in the following regularized answer~\cite{18}:
\begin{equation}
\mathcal G^{\pm}_{\Fa}={\det}\biggl(\hat I+
\frac{e^{(\beta-\nu){\widehat H}^{\pm}}e^{\widehat S}
e^{-\lambda{\widehat H}^{\pm}}e^{\widehat T}-{\hat I}}
{\hat I+e^{\beta{\widehat H}^{\pm}}}\biggr).
\label{16}
\end{equation}
Furthermore, we substitute~\eqref{16} into~\eqref{10} and pass to the momentum representation. The procedure described can also be applied to other correlators $G_{j;l}^{ab}(t)$~\eqref{3}, where $a,b\in\{+,-\}$.

\section{Random walks}\label{sec3}

As it has been shown in~\cite{15},~\cite{16}, the flips of spins on a
one-dimensional lattice may be associated with a random movements
of walkers. Indeed, let us consider a state of the $XX$ Heisenberg
chain, which corresponds to the ferromagnetic ordering of $M$ spins: \break
$|\!\!\Uparrow\rangle\equiv\bigotimes_{n=1}^M|\!\!\uparrow\rangle_n$ (i.e., all spins are oriented ``up''). Consider the average of the following type:
\begin{equation}
F_{j;l}(\lambda)\equiv
\langle\Uparrow\!|\sigma_{j}^{+}e^{-\lambda H_0}
\sigma_{l}^{-}|\!\Uparrow\rangle,
\label{17}
\end{equation}
where the notation $H_0$ implies that the zero magnetic field
$h=0$ is taken in the Hamiltonian~\eqref{6},~\eqref{7} (we shall omit  the same subscript for the corresponding matrices ${\widehat H}^{\pm}$~\eqref{7}), and $\lambda\in\mathbb C$ is an ``evolution''
parameter. ``Up'' (or ``down'') direction of spin corresponds to the empty (or filled) site. Differentiating ${F_{j;l}({\lambda})}$~\eqref{17} and expanding the commutator $\lbrack H_0,\sigma_j^{+}\rbrack$ we obtain the difference--differential equation:
\begin{equation}
\frac{d}{d\lambda}\,F_{j;l}({\lambda})=
\frac12\,(F_{j+1;l}({\lambda})+F_{j-1;l}({\lambda}))
\label{18}
\end{equation}
(and similar equation can be also obtained for the fixed index
$j$). Solution of the given equation is specified by the boundary
conditions imposed on the lattice argument, and by the initial
condition at $\lambda=0$.

The average $F_{j;l}(\lambda)$ can be considered as the generating
function of the trajectories with random turns that start at the
$l$-th site and end up at the $j$-th site. Indeed, let us
introduce the notation~$\mathcal D^K_{\lambda}$ for the operator
of differentiation of $K$-th order with respect to~$\lambda$ at
the point $\lambda=0$. The application of~$\mathcal D^K_{\lambda}$
to the average~\eqref{17} leads to the answer:
\begin{equation}
\mathcal D^K_{\lambda}\bigl[F_{j;l}({\lambda})\bigr]=
\langle\Uparrow|\sigma_{j}^{+}(-H_0)^K\sigma_{l}^{-}|\Uparrow\rangle=
\sum_{n_1,\dots,n_{K-1}}\Delta^{(+)}_{jn_{K-1}}\dots
\Delta^{(+)}_{n_2n_1}\Delta^{(+)}_{n_1l}.
\label{19}
\end{equation}
The right hand side of~\eqref{19} coincides with the entry at the
crossing of the $j$-th row and the $l$-th column of the matrix
given by the product of $K$ copies of the hopping matrix~\eqref{2}. Each matrix in this product corresponds to a transition between the two nearest sites of the lattice. After multiplication by~$2^{K}$ (this is due to the accepted normalization of the matrix~\eqref{2}), the right-hand side of~\eqref{19} gives the number of the trajectories that consist of $K$ steps and are connecting the $l$-th and $j$-th sites. Let us denote this number by $|P_K(l\rightarrow j)|$.

Let $|P_K (l_1, \dots, l_N\rightarrow j_1, \dots, j_N)|$ be a
number of trajectories consisting of $K$ links made by $N$ vicious
walkers in the random turns model. Here, the initial and final
positions of the walkers on the sites are  given respectively by
the sequences $l_1 > l_2 > \dots > l_N$ and $j_1 > j_2 >\dots > j_N$.
Let us consider the $N$-point correlation function ($N\leq M$):
\begin{equation}
F_{j_1,j_2,\dots,j_N;l_1,l_2,\dots,l_N}({\lambda})=
\langle\Uparrow\!\!|\sigma_{j_1}^{+}\sigma_{j_2}^{+}\dots
\sigma_{j_N}^{+}e^{-\lambda H_0}\sigma_{l_1}^{-}
\sigma_{l_2}^{-}\dots\sigma_{l_N}^{-}|\!\!\Uparrow\rangle.
\label{20}
\end{equation}
The present correlator is related to enumeration of the admissible
trajectories which are traced by $N$ vicious walkers. Indeed, the
application of the operator $\mathcal D^K_{\lambda/2}$
to~\eqref{20} results in the average of the type
$$
\langle\Uparrow\!\!|\sigma_{j_1}^{+} \sigma_{j_2}^{+} \dots
\sigma_{j_N}^{+}(-2H_0)^K \sigma_{l_1}^{-}\sigma_{l_2}^{-} \dots
\sigma_{l_N}^{-} |\!\!\Uparrow\rangle\,.
$$
This average provides the numbers $|P_K (l_1, \dots, l_N\rightarrow j_1, \dots, j_N)|$ that can be established with the help of the commutator
\begin{equation}
\lbrack H_0,\sigma_{l_1}^{-}\sigma_{l_2}^{-}\dots\sigma_{l_K}^{-}]=
\sum_{k=1}^K\sigma_{l_1}^{-}\dots\sigma_{l_{k-1}}^{-}
[H_0,\sigma_{l_k}^{-}]\sigma_{l_{k+1}}^{-}\dots\sigma_{l_K}^{-}
\label{21}
\end{equation}
(in this case, differentiation with respect to $\lambda/2$, instead of~$\lambda$, allows to take into account the normalization of the hopping matrix~\eqref{2}). The condition of non-intersection of trajectories of the walkers is expressed by the vanishing of the correlation function~\eqref{20} for any pair of coinciding indices $l_k$ or $j_p$.

Differentiating~\eqref{20} with respect to~$\lambda$ and
applying~\eqref{21}, we obtain the equation:
\begin{equation}
\frac{d}{d\lambda}\,F_{j_1,\dots,j_N;l_1,\dots,l_N}(\lambda)=
\frac12\sum_{k=1}^N\bigl(F_{j_1,
\dots,j_N;l_1,l_2,\dots,l_k+1,\dots,l_N}({\lambda})+
F_{j_1,\dots,j_N;l_1,l_2,\dots,l_k-1,\dots,l_N}({\lambda})\bigr).
\label{22}
\end{equation}
Equation \eqref{22} has been considered in~\cite{16} for the case of periodicity with respect to the lattice argument and with the initial condition:
$$
F_{j_1, \dots, j_N;l_1, \dots, l_N}(0)\,=\,\prod_{m=1}^N\delta_{j_m,l_m}\,.
$$
The function $F_{j_1,j_2, \dots, j_N; l_1,l_2, \dots, l_N}(\lambda)$
can be expressed as the determinant of the matrix consisting of the averages of the type of~\eqref{17}~\cite{16}:
\begin{equation}
F_{j_1,\dots,j_N;l_1,\dots,l_N}({\lambda})=
\det\bigl(F_{j_r;l_s}({\lambda})\bigr)_{1\leq r,s\leq N}.
\label{23}
\end{equation}

\subsection{Random walks on the axis}
\label{sec3.1}
Let us consider an infinite chain ($M\to\infty$). Then, the modified Bessel function $I_{j-l}(\lambda)$ turns out to be a solution of equation~\eqref{18}, which respects the condition
$F_{j;l}({0})={\delta}_{j,l}$~\cite{15}:
\begin{equation}
F_{j;l}({\lambda})=I_{j-l}({\lambda})=\frac1{2\pi
}\int_{-\pi}^{\pi}d\theta\,e^{{\lambda}\cos\theta}e^{i(j-l)\theta}.
\label{24}
\end{equation}
There exists the following expansion into the power series for $I_{j-l}(\lambda)$:
\begin{equation}
I_{j-l}({\lambda})=\sum_{Q\geq|l-j|}
\frac{1}{\bigl(\frac{Q-j+l}2\bigr)!\,\bigl(\frac{Q+j-l}2\bigr)!}
\biggl(\frac{{\lambda}}2\biggr)^Q,
\label{25}
\end{equation}
where the summation index $Q$ is subjected to the requirement:
$Q+|j-l|=0\pmod{2}$. In the limit of large ``time''
(${\lambda}\rightarrow\infty$) and for moderate values of
$m\equiv|l-j|$, using the known asymptotics for the Bessel
function, we obtain for the generating function:
\begin{equation*}
F_{j;l}({\lambda})\simeq\frac{e^{{\lambda}}}{{\sqrt{2\pi{\lambda}}}}
\biggl(1-\frac{4m^2-1}{8{\lambda}}+\dotsb\biggr),
\end{equation*}
where the decay is governed by the critical exponent $\xi=-1/2$.

Let the number $K$ satisfies the relations $K\geq|l-j|$ and
${K+|j-l|=0\nobreak\pmod{2}}$. Then, differentiation of the
series~\eqref{25} leads to the binomial relation
${|P_K(l\rightarrow j)|}=C_K^L$ for the number of all lattice
paths of the ``length'' $K$ between two sites on the infinite axis:
\begin{equation}
|P_K(l\rightarrow j)|\equiv
\mathcal D^K_{\lambda/2}[F_{j;l}({\lambda})]=
\frac{(m+2L)!}{L!\,(m+L)!}\,.
\label{26}
\end{equation}
Here $L$ denotes the one-half of the total number of turns:
$L\equiv(K-m)/2$.

Let us consider now the multi-point correlation function $F_{j_1,j_2, \dots, j_N;l_1,l_2, \dots, l_N}({\lambda})$. As it has been shown above, $\mathcal D^K_{\lambda/2} [F_{j_1, \dots, j_N;l_1, \dots, l_N}(\lambda)]$ has the sense of the number of trajectories of
$N$ vicious walkers each of which has made $K$ steps. A different
combinatorial interpretation of this object, however,  can be proposed. Really, let us consider a representation of the multi-point correlator in the form of the determinant~\eqref{23}. Its entries $F_{j_r;l_s}(\lambda)$ in the case of an infinite chain are given by the Bessel function $I_{j_r-l_s}(\lambda)$~\eqref{24}. The operator $\mathcal D^K_{\lambda/2}$ acts on the determinant as the differentiation of the product of $N$ functions:
\begin{equation}
(f_1(x)f_2(x)\dots f_N(x))^{(K)}=
\sum_{n_1+n_2+\dots+n_N=K}P(n_1,n_2,\dots,n_N)
f^{(n_1)}_1f^{(n_2)}_2\dots f^{(n_N)}_N.
\label{27}
\end{equation}
The notation $f^{(n)}\equiv d^nf(x)/dx^n$ is used here, and the
coefficients $P(n_1,n_2,\dots,n_N)$ are the numbers of permutations with repeats:
\begin{equation}
P(n_1,n_2,\dots,n_N)\equiv
\frac{(n_1+n_2+\dots+n_N)!}{n_1!\,n_2!\,\dots n_N!}.
\label{28}
\end{equation}
Summation in~\eqref{27} is over all non-negative values of
$n_1,n_2,\dots,n_N$, provided their sum is equal to $K$.

Suppose further, that an $N$-dimensional (hyper-)cubic lattice of
infinite extension is given, and each site of this lattice is
labelled by a set of $N$ numbers. Let $\mathcal T_K(q_1,q_2, \dots
,q_N)$ be the number of the lattice trajectories that can be
traced by some walker from the ``initial'' point
$\textbf{\textit{O}}\equiv(0,0, \dots, 0)$ to a point
$(q_1,q_2,\dots,q_N)$ in $K$ steps (by a single step the walker
can move to one of the nearest sites). Let all numbers $q_k$ be
non-negative, and let the inequality $K\geq q_1+q_2+\dots+q_N$ be
fulfilled, which means that the steps that can compensate each
other are allowed. Let us denote the number of these steps as~$2L$,
\begin{equation}
L\equiv\frac{K-q_1-q_2-\dots-q_N}{2}\,.
\label{29}
\end{equation}
Taking into account~\eqref{29}, the following formula for
the number of paths takes place:
\begin{equation}
\mathcal T_K(q_1,q_2,\dots,q_N)=\sum_{L_1+L_2+\dots+L_N=L}
P(q_1+L_1,q_2+L_2,\dots,q_N+L_N,L_1,L_2,\dots,L_N),
\label{30}
\end{equation}
where summation is taken over all non-negative values of $L_1,L_2,\dots,L_N$, provided that their sum is equal to $L$, and the formula~\eqref{28} for the number of permutations with repeats is used.

Turning back to the function  $F_{j_1,j_2,\dots,j_N;l_1,l_2,\dots,l_N}({\lambda})$
let us define the matrix $(n_{rs})_{1\leq r,s\leq N}$ with the entries $n_{rs}\equiv j_r-l_s$. Then, we arrive to the following
\begin{assertion}
The number of trajectories consisting of $K$ links, which are traced by $N$ vicious walkers on an axis, is expressed through the number of trajectories of the same ``length'' $K$, which are traced by a single walker travelling over sites of $N$-dimensional lattice of infinite extension:
\begin{equation}
\begin{aligned}
|P_K(l_1,\dots,l_N\rightarrow j_1,\dots,j_N)|&\equiv
\mathcal D^K_{\lambda/2}\bigl[F_{j_1,\dots,j_N;l_1,\dots,l_N}(\lambda)\bigr]=
\\
&=
\sum_{S_{a_1,a_2,\dots,a_N}}(-1)^{\mathcal P_S}
\mathcal T_K(n_{a_11},n_{a_22},\dots,n_{a_NN}),
\end{aligned}
\label{31}
\end{equation}
where summation is taken over all permutations $S_{a_1,a_2,\dots,a_N}\equiv$
$S(\begin{smallmatrix}1,&2,&\dots,&N \\ a_1,&a_2,&\dots,&a_N\end{smallmatrix})$ of the numbers $1,2,\dots,N$, and~$\mathcal P_S$ implies a parity of a specific permutation.
\end{assertion}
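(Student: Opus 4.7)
The plan is to combine the determinantal formula~\eqref{23} with the generalized Leibniz rule~\eqref{27} and the explicit power series~\eqref{25}. The first equality in~\eqref{31} has already been explained in the discussion preceding the proposition (it is just the combinatorial meaning of $\mathcal D^K_{\lambda/2}$ applied to the correlator), so only the second equality requires argument. First I would expand the determinant~\eqref{23} as a signed sum over permutations,
$$F_{j_1,\dots,j_N;l_1,\dots,l_N}(\lambda)=\sum_{S_{a_1,\dots,a_N}}(-1)^{\mathcal P_S}\prod_{s=1}^N F_{j_{a_s};l_s}(\lambda),$$
and use~\eqref{24} to identify each factor with a modified Bessel function: $F_{j_{a_s};l_s}(\lambda)=I_{n_{a_s s}}(\lambda)$. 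Applying $\mathcal D^K_{\lambda/2}$ termwise via~\eqref{27} then yields
$$\mathcal D^K_{\lambda/2}\Big[\prod_{s=1}^N I_{n_{a_s s}}(\lambda)\Big]=\sum_{k_1+\dots+k_N=K}P(k_1,\dots,k_N)\prod_{s=1}^N \mathcal D^{k_s}_{\lambda/2}\bigl[I_{n_{a_s s}}(\lambda)\bigr].$$

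Next, I would compute each single-factor derivative from~\eqref{25}. Using $I_m=I_{|m|}$ and writing $k_s=|n_{a_s s}|+2L_s$ (so that the parity and non-negativity built into~\eqref{25} are automatic), one reads off
$$\mathcal D^{k_s}_{\lambda/2}\bigl[I_{n_{a_s s}}(\lambda)\bigr]=\frac{k_s!}{L_s!\,(|n_{a_s s}|+L_s)!},$$
and zero otherwise. Inserting this into the Leibniz expansion, the factors $k_s!$ cancel against the denominator of $P(k_1,\dots,k_N)=K!/(k_1!\cdots k_N!)$, and the sum over $(k_1,\dots,k_N)$ reduces to a sum over the half-turn counts $(L_1,\dots,L_N)$ with $L_1+\dots+L_N=L\equiv(K-\sum_s|n_{a_s s}|)/2$, matching~\eqref{29}. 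The resulting coefficient
$$\frac{K!}{\prod_s L_s!\,(|n_{a_s s}|+L_s)!}=P\bigl(|n_{a_1 1}|+L_1,\dots,|n_{a_N N}|+L_N,L_1,\dots,L_N\bigr)$$
is exactly the one appearing in~\eqref{30}, so the derivative equals $\mathcal T_K(n_{a_1 1},\dots,n_{a_N N})$; the extension of $\mathcal T_K$ to negative arguments is justified by the reflection symmetry of the hypercubic lattice, $\mathcal T_K(\dots,q,\dots)=\mathcal T_K(\dots,|q|,\dots)$. Multiplying by $(-1)^{\mathcal P_S}$ and summing over permutations produces the right-hand side of~\eqref{31}.

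The main obstacle is the bookkeeping in the middle step: one must verify carefully that the multinomial coefficient produced by the Leibniz rule, combined with the Bessel derivative, precisely reproduces the multinomial~\eqref{28} with the enlarged $2N$-argument list used in~\eqref{30}, and that the implicit support conditions on the $k_s$ (non-negativity and parity) are automatically enforced by the series~\eqref{25} so that no additional indicator functions survive. Once this identification is established, the remainder of the argument is a purely formal repackaging, and the proposition follows at once from the determinantal representation~\eqref{23} together with the single-walker interpretation of $\mathcal T_K$.
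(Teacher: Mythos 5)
Your proposal is correct, and it rests on the same two pillars as the paper's argument (the determinantal representation \eqref{23} and the relations \eqref{25}--\eqref{30}), but it takes a recognizably different route through them. The paper's proof is a one-line sketch: expand the determinant by a row or column and induct on $N$, so that the $N$-walker statement is reduced to the $(N-1)$-walker one via cofactors. You instead expand the determinant all at once as a signed sum over permutations, apply the multivariate Leibniz rule \eqref{27} to each product $\prod_s I_{n_{a_s s}}(\lambda)$, read off each factor's derivative from the series \eqref{25}, and observe that the multinomial $P(k_1,\dots,k_N)$ combines with the Bessel coefficients $k_s!/\bigl(L_s!\,(|n_{a_s s}|+L_s)!\bigr)$ to give exactly the $2N$-argument multinomial in \eqref{30}, with the constraint $\sum_s k_s=K$ turning into $\sum_s L_s=L$. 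This direct computation is non-inductive and arguably more transparent: it exhibits term by term why each permutation contributes precisely $\mathcal T_K(n_{a_1 1},\dots,n_{a_N N})$, whereas the paper's induction leaves that bookkeeping implicit. Two details you handle that the paper glosses over, and which are worth keeping: the parity/support conditions on the $k_s$ are indeed enforced automatically by \eqref{25} (and the overall parity $\sum_s|n_{a_s s}|\equiv\sum_r j_r-\sum_s l_s \pmod 2$ is permutation-independent, so the nonvanishing condition is consistent across all terms); and $\mathcal T_K$ must be extended to negative arguments by the reflection symmetry $\mathcal T_K(\dots,q,\dots)=\mathcal T_K(\dots,|q|,\dots)$, which the paper already uses tacitly in \eqref{33}. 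I see no gap in your argument.
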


\begin{proof}
In order to verify~\eqref{31} one should develop the determinant~\eqref{23} by a row or by a column and then apply the induction using the relations \eqref{26}--\eqref{30}.
\end{proof}

Let us calculate, for instance,~\eqref{30} at $N=2$:
\begin{equation}
\mathcal T_K(q_1,q_2)=C_{q_1+q_2+2L}^{q_1+L}
\sum_{k=0}^LC^{L-k}_{q_1+L}C_{q_2+L}^{k}=C_K^{q_1+L}C_K^L,
\label{32}
\end{equation}
where $L=(K-q_{1}-q_{2})/2$ denotes one-half of the total number
of turns. Then, using~\eqref{31} we obtain:
\begin{equation}
\mathcal D^K_{\lambda/2}\bigl[F_{j_1,j_2;l_1,l_2}(\lambda)\bigr]=
\mathcal T_K(n_{11},n_{22})-\mathcal T_K(n_{21},n_{12})=
\begin{vmatrix}
C_K^L&C_K^{L+n_{21}}
\\[1mm]
C_K^L&C_K^{L+n_{11}}
\end{vmatrix},
\label{33}
\end{equation}
where $L=(K-n_{11}-n_{22})/2$ and the equality $n_{11}+n_{22}=n_{12}+n_{21}$ is used.

Representation of the entries $F_{j_1,j_2,\dots,j_N;l_1,l_2,\dots,l_N}(\lambda)$~\eqref{23}
in the integral form~\eqref{24} allows to obtain the following expression~\cite{15}:
\begin{equation}
\begin{aligned}
F_{j_1,\dots,j_N;l_1,\dots,l_N}({\lambda})={}&
\frac{e^{{\lambda}N}}{N!}\prod_{i=1}^N
\biggl(\,\int_{-\pi}^{\pi}\frac{d\theta_i}{2\pi}\biggr)
e^{-{\lambda}\sum_{k=1}^N(1-\cos\theta_k)}\times{}
\\
&\times
{S}_{\boldsymbol{\pi}}(e^{i\theta_1},e^{i\theta_2},\dots,
e^{i\theta_N})\prod_{1\leq j<k\leq N}
|e^{i\theta_j}-e^{i\theta_k}|^2,
\end{aligned}
\label{34}
\end{equation}
where $S_{\boldsymbol{\pi}}(e^{i\theta_1},e^{i\theta_2},
\dots,e^{i\theta_N})$~ is the Schur function~\cite{24},
\begin{equation}
S_{\boldsymbol{\pi}}(x_1,x_2,\dots,x_N)\equiv
\frac{\det(x_j^{\pi_k+N-k})_{1\leq j,k\leq N}}
{\det(x_j^{N-k})_{1\leq j,k\leq N}}\,.
\label{35}
\end{equation}
The Schur function \eqref{35} depends on the partition
$\boldsymbol{\pi}=(\pi_1,\pi_2,\dots,\pi_N)$ defined by a sequence
of non-negative integers, which are ordered according to
non-strict decreasing:  $\pi_1\geq\pi_2\geq\dots\geq\pi_N\geq0$.
In virtue of translational invariance it is always possible to
choose  the numbers $l_1$~$>$ $l_2>\dots>l_N\geq-N$ for the initial position of the walkers and to define the elements of the partition  by the equalities $\pi_k=l_k+k$. In order to calculate the leading asymptotics of the generating function in the limit ${\lambda}\rightarrow\infty$, let us transform the integral~\eqref{34} into the following integral~\cite{7},~\cite{25}:
\begin{equation*}
\int d^n\theta\prod_{1\leq j<k\leq N}|\theta_j-\theta_k|^2
e^{-{\lambda}/2\sum_{k=1}^N\theta^2_k}=
\frac{(2\pi)^{N/2}}{{\lambda}^{N^2/2}}\biggl(\,\prod_{p=1}^Np!\biggr).
\end{equation*}
It is a special case of the  \textit{Mehta integral}, which arises in
the theory of the \textit{Gaussian matrix ensembles}. Finally, we
obtain the following asymptotics of the generating function for
the trajectories traced by $N$ vicious walkers:
\begin{equation*}
F_{j_1,\dots,j_N;l_1,\dots,l_N}({\lambda})\simeq
\mathcal A\frac{e^{{\lambda}N}}{{\lambda}^{N^2/2}}\,,\qquad
\mathcal A=\frac{\prod_{p=1}^{N-1}p!}{(2\pi)^{N/2}}
\prod_{1\leq j<k\leq N}\frac{l_j-l_k}{k-j}\,,
\end{equation*}
where the well known formula for ${S}_{\boldsymbol{\lambda}}(1,1,\dots,1)$ is taken into account in $\mathcal A$ \cite{9},~\cite{25}. Therefore, the power-like behavior of $F_{j_1,\dots,j_N;l_1,\dots,l_N}({\lambda})$ is characterized by the exponent $\xi=-N^2/2$.

\subsection{Random walks over superposition of the eigen-states}
\label{sec3.2}
The eigen-functions of the $XX$ Hamiltonian, given by the relations \eqref{6},~\eqref{7}, are constructed as combinations of the states, obtained by ``flipping'' of $N$ spins in the state $|\!\!\Uparrow\rangle$~\cite{21}. Indeed, let us consider all admissible strict partitions $\boldsymbol{\mu}=(\mu_1,\mu_2, \dots, \mu_N)$,
where $M\geq\mu_1>\mu_2> \dots >\mu_N\geq1$, and establish a correspondence between each partition and an appropriate sequence of zeros and unities:
$\bigl\{e_k\equiv e_k(\boldsymbol{\mu})\bigr\}_{k\in\mathcal M}$, where
$e_k=\delta_{k,\mu_n}$, $1\leq n\leq N$. The required eigen-function
is defined as:
\begin{equation}
|\Psi_N(u_1,\dots,u_N)\rangle=
\sum_{\{e_k(\boldsymbol{\mu})\}_{k\in\mathcal M}}
\Upsilon_N(\{u_k\}\!\mid\boldsymbol{\mu})
(\sigma_{M}^{-})^{e_{M}}(\sigma_{M-1}^{-})^{e_{M-1}}\dots
(\sigma_{1}^{-})^{e_1}|\Uparrow\rangle,
\label{36}
\end{equation}
where summation is taken over all strict partitions $\boldsymbol{\mu}$ of the given type. The number of such partitions is expressed through the number of permutations with repeats~\eqref{28}: $P(N,M-N)=C_{M}^N$. The wave functions satisfy the periodic boundary conditions,
\begin{equation}
\Upsilon_N(\{u_k\}|\boldsymbol{\mu})\equiv
\det(u_k^{2\mu_l})_{1\leq k,l\leq N}
\label{37}
\end{equation}
are parametrized by the partitions $\boldsymbol{\mu}$ and by different, up to permutation, sets $\{u_1, \dots, u_N\}$ of solutions of the Bethe equations:
\begin{equation}
u_k^{2M}=(-1)^{N-1},\qquad1\leq k\leq N.
\label{38}
\end{equation}
These solutions have the form: $u_k^2=e^{i2\pi I_k/M}$, where
$I_k$ are integers or half-integers (this depends on parity of
$N$). Due to the antisymmetry of~\eqref{36} with the respect to
permutations of the parameters $u_k$, it is sufficient to restrict
oneself to the strict partitions $M\geq I_1>I_2>\dots>I_N \geq1$
in order to guarantee the single-valuedness of $u_k$. With the
help of~\eqref{36} the corresponding normalized average
\begin{equation}
\langle\sigma_{m+1}^{+}e^{-{\lambda}H_0}\sigma_1^{-}\rangle_N\equiv
\frac{\langle\Psi_N|\sigma_{m+1}^{+}e^{-{\lambda}H_0}\sigma_1^{-}
|\Psi_N\rangle}{\langle\Psi_N\mid\Psi_N\rangle}
\label{39}
\end{equation}
can be represented as a linear combination of $(N+1)$-point generating functions~\eqref{20}. Therefore, this average is related to the number of random walks of  $N+1$ pedestrians. The initial and the final positions of one of them are fixed at $l_1=1$ and $j_1=m+1$, respectively, while for the rest (\textit{virtual}) pedestrians the choice of their initial and the final positions is arbitrary.

Calculation of equation~\eqref{39} is of interest in the thermodynamic limit, when $M$ and $N$ are growing (their ratio remains finite), which means that the number of virtual pedestrians is increasing. In this limit~\cite{26}
\begin{equation}
\widetilde F_{m+1;1}({\lambda})\equiv
\langle\sigma_{m+1}^{+}e^{-{\lambda}H_0}\sigma_1^{-}\rangle_N
\bigr|_{M,N\gg1}\stackrel{\defa}{=}\Tr^{\prime}(\sigma_{m+1}^{+}
e^{-{\lambda}H_0}\sigma_1^{-}),
\label{40}
\end{equation}
where the notation $\Tr^{\prime}$ points out that the procedure
presented in Section~\ref{sec2} is used for the calculation of the
normalized average. The difference-differential relation, analogous to the equation~\eqref{18} is valid for
${\widetilde F}_{m+1;1}({\lambda})$~\eqref{40}:
\begin{align}
&\frac d{d{\lambda}}\,{\widetilde F}_{m+1;1}({\lambda})=
\frac12\,({\widetilde F}_{m;1}({\lambda})+
{\widetilde F}_{m+2;1}({\lambda}))-
\Tr^{{\prime}}(H_0\sigma_{m+1}^{+}e^{-{\lambda}H_0}\sigma_{1}^{-})-{}\notag
\\
&\qquad{}-
\Tr^{\prime}\biggl(\biggl(\frac{1-\sigma_{m+1}^{z}}{2}\biggr)
\sigma_{m}^{+}e^{-{\lambda}H_0}\sigma_{1}^{-}\biggr)-
\Tr^{\prime}\biggl(\biggl(\frac{1-\sigma_{m+1}^{z}}{2}\biggr)
\sigma_{m+2}^{+}e^{-{\lambda}H_0}\sigma_{1}^{-}\biggr).
\label{41}
\end{align}
The form of the present equation makes it possible  to suppose that the average ${\widetilde F}_{m+1;1}({\lambda})$ can also be of interest as a generating function of the random walks.

Let us turn to the calculation of ${\widetilde F}_{m+1;1}({\lambda})$ \eqref{40} in the fermionic representation~\eqref{4}. It is convenient to reduce the problem to calculation of the generating function of the form:
\begin{equation}
\widetilde{\mathcal G}\equiv\Tr^{\prime}
(e^Sc_{m+1}e^{-\lambda H_{0}}c^{\dagger}_1e^{-\nu H_{0}}),
\label{42}
\end{equation}
where the operator $S$ is defined just like in~\eqref{9} (i.e., by
means of the matrix $\widehat S = \break \diag\{S_1,S_2,\dots,S_M\}$).
Indeed, the functional ${\widetilde F}_{m+1;1}({\lambda})$
corresponds to the choice of $\nu=0$ and $S_k=-i\pi\theta(m-k)$,
where $\theta(m-k)$ is the Heavyside function, $\theta(0)=1$. The
second term in the right-hand side of~\eqref{41} corresponds to
the differentiation by $\nu$ at the point $\nu=0$, in the third
term we put $\nu=0$ and differentiate with respect to $S_{m+1}$.
In both cases we put $S_k=-i\pi\theta(m-k)$. Taking into account
the fact, that the contribution of the terms labelled by the
index~$\Ba$ in~\eqref{11} is negligible at sufficiently large $M$
and $N$, we approximately obtain:
\begin{equation}
\begin{aligned}
&\widetilde{\mathcal G}\approx
\biggl[\tr(e^{-{\lambda}{\widehat H}^{0}}{\hat e}_{1,m+1})-
\frac d{d\alpha}\biggr]\det({\hat I}+\widehat{\mathcal M}_1+
\alpha\widehat{\mathcal M}_2)\bigr|_{{\alpha}=0},
\\
&\widehat{\mathcal M}_1+\alpha\widehat{\mathcal M}_2\equiv
e^{-\nu\widehat H^{0}}e^{\widehat S}e^{-\lambda{\widehat H}^{0}}
({\hat I}+\alpha\hat e_{1,m+1}e^{-\lambda{\widehat H}^{0}}),
\end{aligned}
\label{43}
\end{equation}
where $\hat{ e}_{1,m+1}\equiv
({\delta}_{1,n}{\delta}_{m+1,l})_{1\leq n,l\leq M}$. The matrix
$\widehat H^{0}$ is used instead of $\widehat H^{\pm}$ since $s$
can be replaced by zero for the sufficiently large $M$.

The relation~\eqref{43} is written in the coordinate representation. In order to pass to the momentum representation it is convenient to use certain formulas provided in~\cite{22}. Keeping the matrix notations as in~\eqref{43}, we obtain the answer for ${\widetilde F}_{m+1;1}({\lambda})$ (in the limit $M\to\infty$, the corresponding operations should be understood in the sense of the operations over the corresponding integral operators~\cite{26}):
\begin{equation}
{\widetilde F}_{m+1;1}({\lambda})=
\det({\hat I}+\widehat{\mathcal U}_m)
\biggl[\tr(e^{-{\lambda}\hat{\varepsilon}_{0}}\breve e_{1,m+1})-
\tr\biggl(\frac{\widehat{\mathcal V}_{m}}{{\hat I}+
\widehat{\mathcal U}_m}\biggr)\biggr]
\label{44}
\end{equation}
(the notation $\tr$, for instance, corresponds to the trace of
$M\times M$ matrices). The matrices $\widehat{\mathcal U}_m$,
$\widehat{\mathcal V}_{m}$, ${\breve e}_{1,m+1}$ are given by the
corresponding momentum representations of the matrices
$\widehat{\mathcal M}_1$, $\widehat{\mathcal M}_2$, ${\hat
e}_{1,m+1}$~\eqref{43}. However, we shall need explicit
expressions only for the traces $\tr\widehat{\mathcal U}_m$ and
$\tr\widehat{\mathcal V}_{m}$ (see below). In the momentum
representation, $\hat{\varepsilon}_{0}$ is a diagonal matrix of
the eigen-energies of the $XX$ model at $h=0$~\cite{21}. Formally
expanding ${\widetilde F}_{m+1;1}({\lambda})$ in the powers of
$\widehat{\mathcal U}_m$  we shall obtain the answer in two lowest
orders:
\begin{align}
&{\widetilde F}_{m+1;1}({\lambda})\approx
F_{m+1;1}({\lambda})+F_{m+1;1}({\lambda})
\tr\widehat{\mathcal U}_m-\tr\widehat{\mathcal V}_{m},
\nonumber
\\
&\tr\widehat{\mathcal U}_m=(M-2m)F_{1;1}({\lambda}),
\label{45}
\\
&\tr\widehat{\mathcal V}_{m}=F_{m+1;1}(2{\lambda})-
2\sum_{l=1}^{m}F_{m+1;l}({\lambda})F_{l;1}({\lambda}),
\nonumber
\end{align}
where the notation $F_{j;l}({\lambda})$ implies the relations~\eqref{24}. Although $M$ and $m$ are chosen to be large
enough, the ratio $m/M$ is assumed to be finite.
Equation~\eqref{41} is fulfilled in each order separately by the
terms presented in~${\widetilde F}_{m+1;1}$~\eqref{45}.

By an analogy with the  ferromagnetic case, let us act
on~${\widetilde F}_{m+1,1}({\lambda})$~\eqref{45} by the
operator~$\mathcal D^K_{\lambda/2}$. Then, in the first order we
shall obtain the relation~\eqref{26}. In the second order, the
answer is of the following form:
\begin{equation}
MC_K^LC_K^L-C_K^L\sum_{l=0}^KC_K^l+2\sum_{l=1}^m
\begin{vmatrix}
C_K^{L+l-1}&C_K^L
\\
C_K^L&C_K^L
\end{vmatrix}.
\label{46}
\end{equation}
By virtue of~\eqref{33}, the result of the application of
$\mathcal D^K_{\lambda/2}$ to the second order function
$F_{j_1,j_2;l_1,l_2}(\lambda)$ is connected, as a particular case
of~\eqref{31}, with the number of the two-dimensional paths
$\mathcal T_K$, and is expressed through the corresponding
determinant. It means that it will be appropriate to
express~\eqref{46} in the following equivalent form:
\begin{align}
&(M-K)|P_{K}(l\rightarrow l+m)|^2+{}
\notag \\
&\qquad{}+
\mathcal D^K_{\lambda/2}\begin{bmatrix}
2\displaystyle\sum_{l=1}^m
\begin{vmatrix}
F_{m+1;l}({\lambda})&F_{m+1;1}({\lambda})
\\
F_{l;l}({\lambda})&F_{l;1}({\lambda})
\end{vmatrix}
-\displaystyle\sum_{l=0}^K
\begin{vmatrix}
F_{m+L;l}({\lambda})&F_{m+1;1}({\lambda})
\\
F_{l;l}({\lambda})&F_{l;L}({\lambda})
\end{vmatrix}
\end{bmatrix}.
\label{47}
\end{align}
In other words, the result of application of $\mathcal D^K_{\lambda/2}$ to~\eqref{45} in the second order can be reformulated in terms of the random walks of the two pedestrians (see~\eqref{23} and~\eqref{33}) and the squared number of walks of a single pedestrian. The summation by  the index $l$ in $\tr\widehat{\mathcal V}_{m}$~\eqref{45} can be interpreted as the summation over positions of a virtual walker in~\eqref{47}.

Using the equation~\eqref{33}, one can represent~\eqref{47} in
terms of the number of trajectories on a two-dimensional lattice:
\begin{equation}
(M-2(m+1))\mathcal T_K(m,0)+
\sum_{l=0}^m\mathcal T_K(m-l,l)-\sum_{l=1}^L\mathcal T_K(m+l,l)-
\sum_{l=1}^L\mathcal T_K(l,m+l).
\label{48}
\end{equation}
In this relation various lattice trajectories of $K$ links are
enumerated. All these trajectories start at the same point
$\textbf{\textit{O}}=(0,0)$ while they terminate on the segments
of the dashed broken line which connects the points $(L,L+m)$,
$(0,m)$, $(m,0)$, and $(L+m,L)$ (see figure). Formally, the sign
of the sum is not definite though its asymptotics is positive, in
general. An analogous description is expected in the higher orders
as well.

\begin{figure}[h!]
\centering

\includegraphics{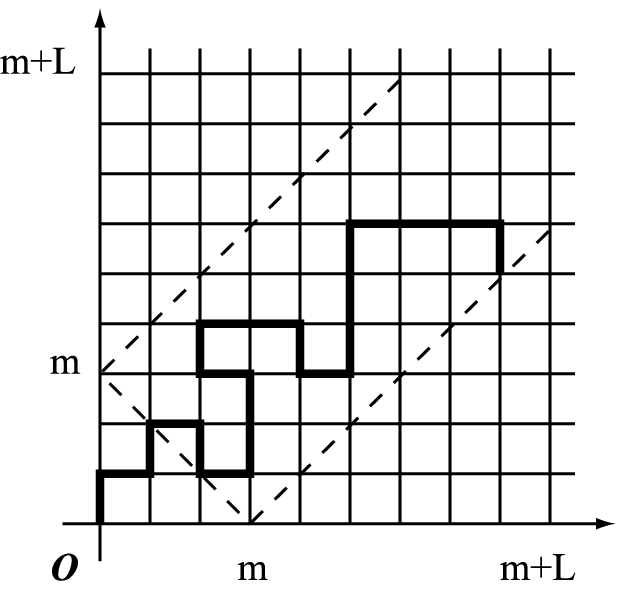}
{\small

\centerline{Typical configuration.}

}

\end{figure}

Let us estimate the behavior of the number of paths, which is
given by the representation~\eqref{46}, in the limit, when the
number of links $K=m+2 L$ increases. We shall assume, that the
restriction $1\ll m\ll L$ is valid which means that $m$ increases
moderately in the comparison with the increase of the number of
turns $L$: for instance, let $L$ increase as $m^2$. Using the
known asymptotical expansion of the logarithm of the
gamma-function (see Appendix)~\cite{27}, one can estimate the binomial coefficient $C_K^L$. Restricting ourselves by the first order of smallness, we obtain:
\begin{equation}
C_K^L\approx\frac{2^K}{{\sqrt{\pi L}}}e^{-m^2/(4L)}
\biggl(1-\frac{m}{2L}\biggl(1-\frac{m^2}{4L}\biggr)\biggr)\approx
\frac{2^K}{{\sqrt{\pi L}}}\biggl(1-\frac{m^2}{4L}\biggr)
\sim\frac{2^{2 L}}{{\sqrt{\pi L}\,}}.
\label{49}
\end{equation}
The second approximate equality in~\eqref{49} takes place if $L$
is increasing faster than $m^2$. The estimate~\eqref{49} characterizes an increase of the number of the trajectories~\eqref{26} for a single pedestrian.

The third term in~\eqref{46} can be written as $2A(m,L)C_K^LC_K^L$, where
\begin{equation}
A(m,L)\equiv-m+\sum_{l=1}^m\frac{(L+m+2-l)_{l-1}}{(L+1)_{l-1}}\,.
\label{50}
\end{equation}
Standard notation $(\alpha)_{n}$ for the Pochhammer's symbol is used
in \eqref{50}~\cite{27}. Applying again an expansion of the
logarithm of the gamma-function (A1), we can estimate $A(m,L)$~\eqref{50}:
\begin{equation}
A(m,L)\simeq mZ_1(m,L)-Z_0(m,L)+\mathcal{O}(m^{-1}),
\label{51}
\end{equation}
where
\begin{equation}
\begin{aligned}
&Z_0(m,L)\equiv e^{m^2/4L}\biggl(1+\frac{m}{L^2}\sum_{l=0}^{m/2}
e^{-l^2/L}\biggl(\frac{m^2}{4}-l^2\biggr)\biggr),
\\
&Z_1(m,L)\equiv-1+e^{m^2/4L}\,\frac{2}{m}\sum_{l=0}^{m/2}e^{-l^2/L}.
\end{aligned}
\label{52}
\end{equation}
Let  the values  $m$ and $L$ increase with the ratio $L/m^2$ being
finite and of order of unity. It can be shown (by means of numerical check as well), that the coefficient functions $Z_0(m,L)$ and $Z_1(m,L)$ remain finite in this case, and the contribution of $Z_0(m,L)$ is negligible  in comparison with $m Z_1(m,L)$ in~\eqref{51}. One can use Eqs.~\eqref{49} and~\eqref{51} in order to estimate~\eqref{46} in the leading approximation:
\begin{equation}
\frac{2^{4L}}{\pi L}\,e^{-m^2/2L}(M+2mZ_1(m,L)-
(\pi L)^{1/2}e^{m^2/4L}).
\label{53}
\end{equation}
Because of the behavior of the coefficient $Z_1(m,L)$, the
corresponding contribution in~\eqref{53} may turn out to be
comparable with $M$. The relation~\eqref{53} demonstrates that the
description of the random walks considered in the representation
of the superposition of the eigen-states is more complicated than
the one in the ferromagnetic case. This description can be
regarded as a simultaneous walks of the initial (i.e., principal)
and virtual pedestrians. The ending points of the trajectories
belonging to all the three segments of the dashed broken line on
the figure (see the representation of two-dimensional random
walks~\eqref{48}) correspond to comparable contributions into the
estimate~\eqref{53}. In certain cases, characterized by the
limiting behavior of the ratio $m^2/L$, the contribution of the
segment between the points $(m,0)$ and $(0,m)$ can become dominating.

\section{Conclusion}
\label{sec4}

It is shown that the correlation functions of the $XX$ Heisenberg
magnet, calculated over the super\-po\-si\-tion of the
eigen-states, as well as over the ferromagnetic state, are
connected with enumeration of the trajectories made by the walkers
moving on the lattice. A relationship is established between the
number of trajectories made by a several vicious walkers and the
number of paths made by a single random turns walker on a lattice
of a dimension equal to the number of the vicious walkers.
Differentiation of the generating function, calculated over the
superposition of the eigen-states, demonstrates a more complicated
combinatorial picture than that of the ferromagnetic case. In
particular, the set of the paths made by a single pedestrian is
replaced by the set of trajectories made simultaneously by the
principal and virtual (both vicious) pedestrians. An estimate is
obtained for the number of trajectories made both by the
principal and the virtual pedestrians.

\section*{Acknowledgement}

This paper was partially  supported  by the Russian Foundation for
Basic Research, No.~07-01-00358, and by the Russian Academy of
Sciences program ,,Mathematical Methods in Non-Linear Dynamics''.

\section*{Appendix}

Asymptotic expansion for the logarithm of the gamma-function at large $|z|$ and $|\arg z |< \pi$ has the form \cite{27}:
$$
\begin{array}{rcl}
\log\Gamma
(z\,+\,\alpha)&=&\displaystyle{\Bigl(z\,+\,\alpha\,-\,\frac12\Bigr) \log
z\,-\,z\,+\,\frac12\,\log(2\pi)}\nonumber\\[0.4cm]
&+&\displaystyle{\sum\limits_{p=1}^n
(-1)^{p+1}\,\frac{B_{p+1}(\alpha)}{p(p+1)}\,z^{-p}\,+\,
\mathcal{O}\Bigl(\frac1{z^{n+1}}\Bigr)}\,,
\end{array}\eqno(A1)
$$
where $n= 1, 2, 3, \dots$. The Bernoulli polynomials $B_{n}(\alpha)$ ($A1$) are defined as follows:
\[
B_n(\alpha)\,=\,\sum\limits_{l=0}^n C_n^l\,B_l\,\alpha^{n-l}\,,
\]
where $C_n^l$ are the binomial coefficients, and $B_l$ are the Bernoulli numbers. The first Bernoulli polynomials $B_{p}(\alpha)$ look as follows:
\begin{eqnarray}
B_0(\alpha)\,=\,1\,,\qquad B_1(\alpha)\,=\,\alpha\,-\,\frac12\,,\qquad
B_2(\alpha)\,=\,\alpha^2\,-\,\alpha\,+\,\frac16\,,\nonumber\\[0.4cm]
B_3(\alpha)\,=\,\alpha^3\,-\,\frac32\,\alpha^2\,+\,\frac12\,\alpha\,,\qquad
B_4(\alpha)\,=\,\alpha^4\,-\,2 \alpha^3\,+\,\alpha^2\,-\,\frac1{30}\,,
\nonumber\end{eqnarray}
where the Bernoulli numbers $B_l$ are used:
\begin{eqnarray}
B_0\,=\,1\,,\qquad B_1\,=\,-\,\frac12\,,\qquad
B_2\,=\,\frac16\,,\qquad B_4\,=\,-\,\frac1{30}\,.
\nonumber
\end{eqnarray}
\vskip 0.5cm

\newpage

\end{document}